\newcommand{\KL}{\textrm{KL}}
\newcommand{\E}[1]{\underset{#1}{\mathbb{E}}}
\newcommand{\Var}[1]{\underset{#1}{\textrm{Var}}}
\newcommand{\weightedmean}{\widetilde{M}}
\newcommand{\empmean}{\widehat{M}}
\newcommand{\emptarget}{\widehat{M}}
\newcommand{\truemean}{M}
\theoremstyle{plain}
\newtheorem{theorem}{Theorem}[section]
\newtheorem{lemma}[theorem]{Lemma}
\theoremstyle{definition}
\newtheorem{assumption}[theorem]{Assumption}
\Crefname{assumption}{Assumption}{Assumptions}
\newtheorem{task}[theorem]{Task}
\theoremstyle{remark}
\title{How Balanced Should Causal Covariates Be?}
\author[{ }]{\href{mailto:skkaul@cs.cmu.edu}{Shiva Kaul}{}}
\author[1]{\href{mailto:manjmin6@gmail.com}{Min-Gyu Kim}{}}
\affil[1]{%
    Department of Biomedical Informatics\\ 
    Ajou University School of Medicine\\
    Suwon, Republic of Korea
}
\begin{document}

\maketitle

\begin{abstract}
Covariate balancing is a popular technique for controlling confounding in observational studies. It finds weights for the treatment group which are close to uniform, but make the group's covariate means (approximately) equal to those of the entire sample. A crucial question is: how approximate should the balancing be, in order to minimize the error of the final estimate? Current guidance is derived from heuristic or asymptotic analyses, which are uninformative when the size of the sample is small compared to the number of covariates. This paper presents the first rigorous, nonasymptotic analysis of covariate balancing; specifically, we use PAC-Bayesian techniques to derive valid, finite-sample confidence intervals for the treatment effect. More generally, we prove these guarantees for a flexible form of covariate balancing where the regularization parameters weighting the tradeoff between bias (imbalance) and variance (divergence from uniform) are optimized, not fixed. This gives rise to a new balancing algorithm which empirically delivers superior adaptivity. Our overall contribution is to make covariate balancing a more reliable method for causal inference.
\end{abstract}

\section{Introduction}
\label{sec:intro}
In an observational study, some unknown mechanism splits the participants between the treatment and control groups. This can lead to systematic differences between the two groups, biasing estimates of the treatment effect. A key assumption in  causal inference is unconfoundedness \citep{rosenbaum1983central}, which posits that biases of the selection mechanism can be wholly accounted for by the covariates $X_i \in \mathbb{R}^d$ observed for each participant $i$. This assumption is more plausible when the number of covariates $d$ is large. Under this assumption, any meaningful differences between the groups must manifest as discrepancies between their covariate statistics; eliminating such discrepancies therefore controls confounding. This motivates the the technique of covariate balancing, also referred to as balancing weights, calibration weights or minimal weights (surveyed by \citet{ben2021balancing, chattopadhyay2020balancing}). This technique is embodied by the following optimization program for weighting the treatment group; a similar program can be written for the control group.
\begin{align}
\label{abstractpenaltybalancing}
\min_{W}   & \quad \max_{1 \leq j \leq d} \Big| \sum_{\textrm{treated}\ i} W_i X_{ij} - \emptarget_j \Big|\ + \\
            & \quad \lambda \cdot \textrm{Divergence}(W\ ||\ \text{Uniform})\ \nonumber\\
\text{s.t.} & \quad W \textrm{ is a distribution on the treatment group } \nonumber
\end{align}
Here, $\lambda > 0$ is a regularization parameter and $\textrm{Divergence}$ is an $f$-divergence between probability distributions. $\emptarget \in \mathbb{R}^d$ is the target mean vector, typically taken to be the empirical mean $\sum_{i=1}^n \frac{1}{n} X_{ij}$  of the entire sample, both treatment and control. An equivalent, more common formulation constrains imbalance by a tolerance parameter $\tau \geq 0$: 
\begin{align}
\label{abstractbalancing}
\min_{W}   & \quad \textrm{Divergence}(W\ ||\ \text{Uniform}) \\
\text{s.t.} & \quad W \textrm{ is a distribution on the treatment group } \nonumber\\
            & \quad \Big| \sum_{\textrm{treated}\ i} W_i X_{ij} - \emptarget_j \Big| \leq \tau \quad \forall\ 1 \leq j \leq d \nonumber
\end{align}
This program expresses an intuitive tradeoff between covariate imbalance and overfitting. The constraint ensures that each (weighted) covariate mean in the treatment group matches the target $\emptarget$, up to tolerance $\tau$; in other words, the weighted treatment group looks roughly like an unbiased random sample, insofar as covariate means are concerned. However, tightly balancing the means ($\tau \approx 0$) may cause some weights to become very small, reducing the effective sample size. Furthermore, $\tau \approx 0$ is generally not feasible when $n < d$. Using the KL divergence leads to entropy balancing \citep{hainmueller2012entropy}, and using the $\chi^2$ divergence leads to stable balancing weights \citep{zubizarreta2015stable}.

The success of covariate balancing depends on choosing $\tau$ (or $\lambda$) well. Because covariate balancing isn't predictive --- it estimates an unknown population quantity --- techniques for tuning $\tau$ on held-out data (e.g.~cross-validation) aren't readily adapted \citep{chattopadhyay2020balancing}. Ideally, $\tau$ should be chosen to minimize the error of the estimate derived from the weights (i.e.~the radius of its confidence interval). Unfortunately, no valid, nontrivial finite-sample confidence intervals have been derived for covariate balancing. Instead, practitioners rely on heuristics and asymptotic analyses. In early work studying approximate covariate balancing, \citet{zubizarreta2015stable} noted that $\tau = 0.1$ had been adopted in observational study matching with standardized covariates, but that this setting was problem-dependent. As noted by \citet{ben2021balancing}, various asymptotic analyses \citep{hirshberg2019minimax, hirshberg2021augmented} support $\tau = \Theta(1/n)$. The most complete, concrete guidance is currently given by the selection algorithm of \citet{wang2020minimal}, which was simplified slightly by \citet{chattopadhyay2020balancing}. Unfortunately, this algorithm still offers no provable guarantees. Furthermore, it (impractically) requires resolving (\ref{abstractbalancing}), and performing $d$ distinct bootstrap procedures, for every candidate value of $\tau$. 

Choosing $\tau$ remains a serious practical hurdle for covariate balancing. For example, in the recent benchmark comparison by \citet{cousineau2023estimating}, entropy balancing and stable balancing weights could not be successfully applied to over 40\% of the studies, due to infeasibility issues involving $\tau$. These problems arose even with a moderate number of covariates: each study in the benchmark had $n = 4802$ participants and $d = 58$ covariates. In production-scale, observational healthcare studies which inform clinical practice \citep{suchard2019comprehensive, chen2021comparative}, it is common to have $n \ll d$, with $d$ in the tens or hundreds of thousands \citep{tian2018evaluating, zhang2022adjusting}. Such studies instead use propensity score methods (such as propensity score matching or inverse propensity weighting, per \citet{hripcsak2021drawing}), which may not achieve balance.

\subsection{Our Contributions}

This paper provides the missing finite-sample analysis of covariate balancing. We derive the first valid, nontrivial, finite-sample confidence intervals for the weighted estimates produced by covariate balancing. Our results hold when the observational data come from a different distribution than the target of our inference --- a scenario called covariate shift or target transportation \citep{dahabreh2020extending, josey2021transporting}. Furthermore, we analyze an extension of (\ref{abstractpenaltybalancing}) where $\lambda$ is optimized along with $W$. This leads not just to satisfying theoretical guarantees, but to a new balancing algorithm which demonstrates empirical benefits when hyperparameter settings are uncertain.

The starting point of our analysis is the simple observation that the program (\ref{abstractpenaltybalancing}) conceptually aligns with the PAC-Bayesian paradigm of generalization (initiated by \citet{mcallester1998some} and \citet{shawe1997pac}; recently surveyed by \citet{alquier2024user}). This approach was developed to upper bound the expected population risk of a ``posterior'' distribution over hypotheses which is computed from a ``prior'' distribution and the data. The population risk bound is lower for posteriors (in our setting, $W$) which maintain low divergence from the the prior (for us, the uniform distribution) but also achieve low empirical risk (for us, empirical imbalance). There are conceptual and technical incompatibilities between covariance balancing and existing PAC-Bayesian inequalities, so we derive a fresh analysis starting from the underlying variational representation of $f$-divergences \citep{donsker1983asymptotic, ruderman2012tighter, ohnishi2021novel}. 

\Cref{sec:analysis} presents a generalization of covariate balancing which involves dynamically-optimized regularization parameters. We prove a nonasymptotic confidence interval for the estimates produced by this program, which (as a special case) apply to standard covariate balancing. Despite its generality, this proof is short and simple, relying on basic convex analysis. We describe how this improves upon a naive confidence interval presented earlier in \Cref{sec:thechallenge}. \Cref{sec:experiments} presents a pair of carefully-designed simulations which demonstrate that our new algorithm breaks through an adaptivity barrier, using the data to correctly, automatically optimize regularization parameters. 
Finally, the experiments in \Cref{sec:realdataexperiments} show these adaptivity benefits carry over to (completely) real data.  

\section{Previous Work}

\subsection{Main Balancing Algorithms}
\label{sec:balancingalgs}

The following algorithms involve solving either (\ref{abstractpenaltybalancing}) or (\ref{abstractbalancing}), so they are the main baselines for this paper. 

\textbf{Entropy balancing} \citep{hainmueller2012entropy}: this solves (\ref{abstractbalancing}) with the KL divergence. $\tau$ is set to a constant with a default value of $1.0$. The convex dual problem is unconstrained, so the Levenberg–Marquardt algorithm is used to solve it. Since this algorithm depends on a reasonably-conditioned $d \times d$ Hessian, it doesn't work when $n < d$. Though easily circumvented by using a standard gradient algorithm on the primal, this issue portends some of the practical inadequacies of covariate balancing in high dimensions.

\textbf{Stable balancing weights} \citep{zubizarreta2015stable}: this solves (\ref{abstractbalancing}) with the $\chi^2$ divergence. By default, $\tau$ is chosen according to the algorithm of \citet{wang2020minimal} from a grid of 8 possible values between $0.0001$ and $0.1$.  

\textbf{Covariate balancing propensity score} \citep{imai2014covariate} is typically described as a kind of  logistic regression, with exact ($\tau = 0$) balance constraints imposed on the fitted inverse-propensity weights. When used to estimate $\mu_1$, the inverse-propensity weights are normalized if a constant covariate is included \citep{sloczynski2023covariate}. Under this normalization, it can be formulated as (\ref{abstractbalancing}) with a shifted KL divergence \citep{josey2021framework}. \Cref{appx:cbps} elucidates this connection and presents the relevant $f$-divergence.

\subsection{Other Balancing Algorithms}

\textbf{Approximate residual balancing} \citep{athey2018approximate}: this approach solves a mild variant of (\ref{abstractpenaltybalancing}). By default, $\lambda = 1$. However, the weights are not directly used to form the weighted estimate (\ref{weightedestimate}). Instead, an elastic net regression is fit to the observed outcomes, and the weighted sum of its residuals are added back into the treatment effect estimate. 

\textbf{Kernel methods} replace the maximum imbalance, over the $d$ covariates, with a maximum over smooth functions $f$ within a reproducing kernel Hilbert space. \citet{wong2018kernel} add a term which penalizes the nonsmoothness of $f$, which reduces the maximum over $f$ to finding the top eigenvalue of a ($W$-dependent) matrix; furthermore, this eigenvalue can be differentiated with respect to $W$. Whereas \citet{wong2018kernel} solve a variant of (\ref{abstractpenaltybalancing}), \citet{hazlett2020kernel,hazlett2014thesis} focuses on (\ref{abstractbalancing}). Exact balance ($\tau = 0$) is imposed on the subspace corresponding to the top $r$ eigenvalues of the kernel matrix, where $r$ is chosen to minimize a proxy for estimator bias. Some of the ideas in this paper can be generalized to the kernel setting. 

An alternative to minimizing a combination of imbalance and divergence, as in (\ref{abstractpenaltybalancing}) or (\ref{abstractbalancing}), is to minimize a covariate-aware distance between distributions. For example, \citet{kong2023covariate} minimize an integral probability metric, such as the Wasserstein distance. \citet{huling2024energy} minimize the energy distance. These distances are very stringent, so minimizing them ensures consistent estimates without parametric assumptions. Unfortunately, these distances are even more difficult to constrain when $n \ll d$. For example, the Wasserstein distance converges at a slow rate of $O(n^{-1/d})$. 

More algorithms are discussed in \Cref{appx:relatedwork}.

\subsection{Balancing Analyses}

Most mathematical analyses of covariate balancing, and of causal inference more generally, have focused on asymptotics. The most well-studied asymptotic desideratum is \emph{$\sqrt{n}$-consistency}, which means the error of the estimate is $O(1/\sqrt{n})$. When $d = o(n)$, the weights from (\ref{abstractbalancing}) are consistent in this sense \citep{wang2020minimal}. When, $d = \omega(n)$, some kind of sparsity assumption is necessary to achieve consistency. However, even under such assumptions, covariate balancing has not been generally proven $\sqrt{n}$-consistent. This motivated the work of \citet{athey2018approximate}, which augments covariate balancing with sparse regression. 

Other desirable asymptotic properties include \emph{double robustness}, which means the estimate is consistent if at least one of the following holds: the model for the outcomes is correct (e.g. linear in the covariates), or the model for the propensity scores is correct \citep{fan2016improving, zhao2017entropy}. \emph{Semiparametric efficiency} essentially refers to achieving the lowest possible variance among estimators making the same assumptions about either the outcomes or propensity scores \citep{chan2016globally, zhao2017entropy}. 

Despite the focus on asymptotics, some finite-sample guarantees have been proven for covariate balancing. To justify their algorithm for automatically selecting $\tau$, \citet{wang2020minimal} bound, up to some positive constants, the excess loss incurred by balancing too many covariates. \citet{bruns2022outcome} derive the optimal weights (i.e.~the optimal bias-variance tradeoff) in terms of mean squared error. However, these are phrased in terms of unknown population quantities. \citet{su2023estimated} prove a useful finite-sample decomposition of the inverse propensity weighting error.  

It should be noted that all theoretical results for covariate balancing, including this paper's finite-sample intervals,  depend upon causal assumptions which are strong and potentially unrealistic. (These will be discussed in \Cref{sec:assumptions}). Thus, a truly ``rigorous'' process of causal inference must perform additional sensitivity checks \citep{imai2010identification} and/or embed itself in meta-analytic inference which does not require such assumptions \citep{kaul2024meta}. In these circumstances, practitioners 
 often disregard the nominal coverage probability of confidence intervals, and just make use of point estimates \citep{schuemie2020confident}. 
We stress that the theoretical guarantees developed in this paper are presented not just for their own sake, but to help design new algorithms which are easier to use, demonstrate better adaptivity, and ultimately achieve lower error.

\subsection{Data-Driven Regularization}

The problem of optimizing regularization parameters in a data-dependent manner is a recurring topic in the PAC-Bayesian literature \citep{catoni2007pacbayes, alquier2024user}. As the latter survey notes, the subtlety of this issue has resulted in the publication of mistaken proofs. The fundamental difficulty arises from the fact that the thresholds in Markov's inequality and Hoeffding's lemma cannot be random. Recently, \citet{rodriguez2024more} derived a version of Catoni's classic PAC-Bayesian inequality which holds uniformly for all (potentially data-dependent) $\lambda$, with a small cost of $\log n / n$ in the confidence term. As \Cref{sec:analysis} demonstrates, we exploit specific aspects of covariate balancing to obtain results which may not be achievable in the most general PAC-Bayesian setting. 

\section{Preliminaries}

Suppose there are $n_0$ participants in the control group and $n_1$ participants in the treatment group. Each participant $i$ has data $(X_i, Y_i(0), Y_i(1), T_i)$. $X_i \in \mathbb{R}^d$ are covariates, possibly generated by computing features of underlying lower-dimensional covariates. $Y_i(0)$ and $Y_i(1)$ are the potential outcomes under no treatment and treatment, respectively. $T_i \in \{0, 1\}$ denotes whether the participant was actually treated. We observe only $(X_i, Y_i(T_i), T_i)$, leaving one of the potential outcomes unobserved. Our goal is to estimate the average treatment effect (ATE), which is the difference between the treatment-specific means $\mu_1$ and $\mu_0$: \begin{align}
\label{ate}
\textrm{ATE} = \underbrace{\mathbb{E}_\mathcal{T}\ Y(1)}_{\mu_1} - \underbrace{\mathbb{E}_\mathcal{T}\ Y(0)}_{\mu_0} 
\end{align}
The expectation in (\ref{ate}) may be over a different ``target'' distribution $\mathcal{T}$ than the one sourcing the data $\mathcal{S}$. If $\mathcal{S} \neq \mathcal{T}$ --- a phenomenon referred to as covariate shift or transportation --- we don't see individual data from $\mathcal{T}$, just a target vector $\emptarget \in \mathbb{R}^d$ formed by averaging $n$  samples from $\mathcal{T}$. (For example, imagine using observational data to emulate a randomized trial for which summary statistics are published, but individual patient data are unavailable). More commonly, $\mathcal{S} = \mathcal{T}$ and $\emptarget = \sum_{i=1}^n \frac{1}{n} X_i$ where $n = n_0 + n_1$. The following mean vectors --- weighted, target and true --- should, ideally, all be close to one another.
\begin{align}
\label{threemeans}
\weightedmean = \sum_{\textrm{treated } i} W_i X_i \hspace{1mm}
&&
\emptarget \in \mathbb{R}^d \hspace{1mm}
&&
\truemean = \mathbb{E}_\mathcal{T}\ \emptarget
\end{align}
As is typical in the covariate balancing literature, we focus on deriving an estimate $\hat{\mu}_1$ of $\mu_1$. Then, an estimate $\hat{\mu}_0$ of $\mu_0$ can be similarly produced, and the two can be combined into an estimate $\widehat{\textrm{ATE}}$ of ATE. 
\begin{task}[Weighted estimation of treatment-specific means]
\label{ourtask}
Let $\{(X_i, Y_i(T_i), T_i)\}_i$ be data from $\mathcal{S}$ and $\emptarget$ be an empirical target vector from $\mathcal{T}$. Use these to pick a probability distribution $W$ over the treated $\{i : T_i = 1\}$ and estimate: \begin{align} 
\label{weightedestimate}
\hat{\mu}_1 = \sum_{\textrm{treated}\ i} W_i Y_i(1)
\end{align}
Given $\alpha \in (0,\frac{1}{2})$, find $\epsilon$ such that $|\hat{\mu}_1 - \mu_1| \leq \epsilon$ with probability $1-\alpha$ over the randomness of the participant data. That is, $\hat{\mu}_1 \pm \epsilon$ is a $1-\alpha$ confidence interval for $\mu_1$.
\end{task}

\subsection{Mathematical Notation}
\label{sec:notation}

Given any $d$-dimensional real vector $v \in \mathbb{R}^d$, these are its $\ell_1$, $\ell_2$, and $\ell_\infty$ norms: \begin{align*}
||v||_1 = \sum_j |v_j| 
\hspace{5mm}
||v||^2_2 = \sum_j v_j^2
\hspace{5mm}
||v||_\infty = \max_j |v_j| 
\end{align*}
Let $f: [0, +\infty)\to(-\infty, +\infty]$ be a convex function such that $f(t)$ is finite for all $t > 0$, $f(1) = 0$, and $f(0)=\lim_{t\to 0^+} f(t)$. Let $U$ (sometimes expanded as  \textrm{Uniform}) be the uniform distribution over the treatment group. Let $W$ be any other distribution on the treatment group. Then the $f$-divergence between $W$ and $U$ is:
\begin{align}
\label{fdivergence}
\textrm{Divergence}(W\ ||\ U) &= 
\sum_{\textrm{treated i}} \frac{1}{n_1} f(n_1 \cdot W_i) 
\end{align}
Its (restricted) convex conjugate is defined for $Z \in \mathbb{R}^{n_1}$: \begin{align*}
\textrm{Divergence}^*(Z\ ||\ U)
&= \sup_{W} Z^\top W - \textrm{Divergence}(W\ ||\ U)
\end{align*}
where the supremum is taken over probability distributions $W$ \footnote{If the supremum were taken over the larger set of nonnegative vectors (i.e. measures) then $\textrm{Divergence}^*$ could be tidily expressed in terms of the conjugate $f^*$ of $f$. However, that leads to looser bounds \citep{ruderman2012tighter, ohnishi2021novel}.}. For example, the KL divergence between $W$ and $U$ is the difference of the Shannon entropies $H(U)$ and $H(W)$: 
\begin{align}
\label{klandentropy}
\KL(W\ ||\ U) &= \underbrace{\log(n_1)}_{H(U)} - \underbrace{\sum_{\textrm{treated}\ i} - W_i \log W_i}_{H(W)} \\
\KL^*(Z\ ||\ U) &= \log \frac{1}{n_1} \sum_{\textrm{treated}\ i} \exp Z_i \nonumber
\end{align}
The equality (\ref{klandentropy}) shows why entropy balancing, which maximizes $H(W)$, is a special case of (\ref{abstractbalancing}).

\subsection{Assumptions}
\label{sec:assumptions}

Much like \citet{athey2018approximate}, we assume that the outcomes are linear functions of the covariates. 
\begin{assumption}[Linear outcomes]
\label{asm:linearity}
Let $0 < k \leq d$ be some sparsity constant. There are $u_* \in \mathbb{R}^d$ and $v_* \in \mathbb{R}^d$ such that $||u_*||_1 \leq k$, $||v_*||_1 \leq k$. For all covariates $X \in \mathbb{R}^d$ and potential outcomes $(Y(0), Y(1))$ sampled from either $\mathcal{S}$ or $\mathcal{T}$, $Y(0) = u_*^\top X$ and $Y(1) = v_*^\top X$.
\end{assumption}
Note that we assume only ``approximate'' sparsity, with $\ell_1$ norm bounds on $u_*$ and $v_*$. By contrast, \citet{athey2018approximate} assumes an exact sparsity bound on the number of nonzero coordinates of $u_*$ and $v_*$. On the other hand, our assumption does not permit noise in the outcomes. Additive zero-mean noise can be handled by standard techniques, which we omit to highlight the novelties of our approach. Note that we permit $n \ll d$, where it is typically not difficult to fit the outcomes with near-zero error.

We also assume that the empirical target $\emptarget$ concentrates to the true $\truemean$ at a typical rate. If each of the $d$ covariates is bounded (i.e.~is normalized to take values in $[-1,1]$), and the samples are independent, then the following assumption is just a consequence of Hoeffding's inequality.
\begin{assumption}[Concentration of covariates]
\label{asm:concentration}
For constants $\alpha \in (0,\frac{1}{2})$ and $C > 0$, $||\emptarget - \truemean||_\infty \leq \frac{C \log d }{\sqrt{n}}$ with probability $1 - \alpha$.
\end{assumption}
Our nonasymptotic confidence interval depends on only the aforementioned assumptions. When extending it to an asymptotic consistency guarantee, we will need to invoke more traditional causal assumptions, in order to ensure the existence of (correct) inverse-propensity weights. As discussed in \Cref{appx:existenceofipw}, the following assumption is a consequence of strong ignorability \citep{rosenbaum1983central}. When $\mathcal{S} \neq \mathcal{T}$, a condition such as positive participation probability is also needed \citep{dahabreh2020extending}.
\begin{assumption}[Inverse-propensity weights]
\label{asm:optimalweights}
There is a probability distribution $W^*$ which is ratio-bounded (i.e. $\frac{\max_i W^*_i}{\min_i W^*_i} \leq R$ for some constant $R \geq 1$) and satisfies:\begin{align*}
\left|\left| \sum_{\textrm{treated}\ i} W^*_i X_{i} - \truemean \right|\right|_\infty = O\left(\frac{\log d}{n_1}\right)
\end{align*}
\end{assumption}
We will demonstrate consistency in the same high-dimensional regime as \citet{athey2018approximate}.
\begin{assumption}[Asymptotics]
\label{asm:asymptotics}
As $n \to \infty$,  $\frac{k \log d}{\sqrt{n_1}} \to 0$.
\end{assumption}

\subsection{A Naive Confidence Interval}
\label{sec:thechallenge}

To motivate our main results, consider the following naive attempt to construct a confidence interval. Expanding the definitions in (\ref{weightedestimate}) and (\ref{threemeans}), then applying H{\"o}lder's inequality in conjunction with  \Cref{asm:linearity,asm:concentration}, we can upper bound $\hat{\mu}_1 - \mu_1$. Then, we can repeat the same logic on $\mu_1 - \hat{\mu}_1$ to bound $|\hat{\mu}_1 - \mu_1|$.
\begin{align}
\hat{\mu}_1 - \mu_1
&= \sum_{\textrm{treated}\ i} W_i Y_i(1) - \mathbb{E}\ Y(1) \nonumber\\
&= \sum_{\textrm{treated}\ i} W_i v_*^\top X - \mathbb{E}\ v_*^\top X \nonumber\\
&= v_*^\top (\weightedmean - \truemean) \label{linofmeans}\\
&= v_*^\top(\weightedmean - \empmean) + v_*^\top(\empmean - \truemean) \nonumber\\
&\leq ||v_*||_1 ||\weightedmean - \empmean||_\infty + ||v_*||_1 ||\empmean - \truemean ||_\infty \nonumber\\
&\leq k \Big( ||\weightedmean - \empmean||_\infty + \frac{C \log d}{\sqrt n} \Big) \label{naiveinterval} 
\end{align}
Note that when $n < d$, there generally aren't weights which perfectly balance the covariates; the imbalance term will typically have to be (substantially) positive. Nonetheless, this interval blithely suggests weights which make $\weightedmean$ as close as possible to $\empmean$; there is no benefit to choosing $W$ with low divergence from uniform. Thus, the challenge for our analysis is: can we use the PAC-Bayesian machinery to (1) scale down the imbalance term, and (2) quantitatively explain the intuitive benefit of near-uniform weights?

\section{PAC-Bayesian Analysis}
\label{sec:analysis}
We propose the following program, which generalizes (\ref{abstractpenaltybalancing}). We call this \textbf{flexible covariate balancing} because the regularization parameter $\lambda$ is no longer fixed; it is optimized jointly with the weights $W$, along with a new parameter $\delta$. The appropriate tradeoff between imbalance and divergence is determined by the data.
\begin{align}
\label{jointbalance}
\min_{W,\lambda,\delta} \max_{z}\ & \big((1-\delta)k + \delta \beta_z\big) \max_{j} \Big| \sum_{\textrm{treated}\ i} W_i X_{ij} - \emptarget_j \Big|\ + \nonumber\\
            & (1-\delta) (C k \log d) / \sqrt{n}\ + \\
            & \lambda\ \textrm{Divergence}(W\ ||\ U)\ + \nonumber\\
            & \lambda\ \textrm{Divergence}^*\Big(\frac{\delta}{\lambda} [z Y_i(1) - \widehat{Y}_i(1)]_{\textrm{treated}\ i} \ \Big|\Big|\ U \Big) \nonumber\\ 
\text{s.t.}\ & W \textrm{ is a distribution on the treatment group } \nonumber \\
            & \lambda > 0 \hspace{6mm} 0 \leq \delta \leq 1 \hspace{6mm} z \in \{-1, 1\} \nonumber
\end{align}
Here, $\widehat{Y}_i(1)$ is a sort of fulcrum for $Y_i(1)$ which is calculated from $X_i$ and the largest-magnitude covariate of $\widehat{M}$: 
\begin{align*}
j_* = \textrm{argmax}_j |\emptarget_j|
\hspace{4mm}
s = \textrm{sign}\big(\emptarget_{j^*}\big)
\hspace{4mm}
\widehat{Y}_i(1) = - s \beta_z X_{ij^*}
\end{align*}
For each sign $z \in \{-1, 1\}$, $\beta_z \geq 0$ is the value of the following program, which is solved for each value $z \in \{-1,1\}$ in preparation for (\ref{jointbalance}). It can be rephrased as a linear program, as shown \Cref{appx:linearprogram}.
\begin{align}
\label{precursorprogram}
\max_{v_* \in \mathbb{R}^d, \truemean \in \mathbb{R}^d} & \quad \frac{-z}{||\emptarget||_\infty} v_*^\top M \\
\text{s.t.} & \quad ||v_*||_1 \leq k \hspace{6mm} ||\emptarget - M||_\infty \leq \frac{C \log d}{\sqrt{n}} \nonumber\\
            & \quad v_*^\top X_i = Y_i(1) \hspace{2mm} \textrm{for all treated}\ i \nonumber
\end{align}
The estimate derived from (\ref{jointbalance}) enjoys the following nonasymptotic guarantee on its error.
\begin{theorem}
\label{thm:confidenceinterval}
Fix $\alpha \in (0,\frac{1}{2})$. Let $\nu$ be the objective value of (\ref{jointbalance}) obtained by any feasible $(W, \lambda)$. Construct $\hat{\mu}_1$ from $W$ as in (\ref{weightedestimate}). Then, under \cref{asm:linearity,asm:concentration},  $|\hat{\mu}_1 - \mu_1| \leq \nu$ with probability $1-\alpha$.
\end{theorem}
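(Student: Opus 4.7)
The plan is to combine the naive H{\"o}lder bound of \Cref{sec:thechallenge} with a PAC-Bayesian Fenchel--Young inequality via the convex decomposition $1 = (1-\delta) + \delta$; the precursor program (\ref{precursorprogram}) and the fulcrum $\widehat{Y}_i(1)$ exist precisely to bridge these two regimes. Throughout, \Cref{asm:linearity} lets me write $\hat{\mu}_1 = v_*^\top \weightedmean$ and $\mu_1 = v_*^\top \truemean$, so $\hat{\mu}_1 - \mu_1 = v_*^\top(\weightedmean - \truemean)$. I would prove $\hat{\mu}_1 - \mu_1 \leq \nu$ by matching the $z=1$ branch of the objective's $\max_z$, and obtain $\mu_1 - \hat{\mu}_1 \leq \nu$ by the symmetric argument with $z=-1$; both conclusions will hold on the single $1-\alpha$ event where $\|\empmean - \truemean\|_\infty \leq C \log d / \sqrt{n}$ guaranteed by \Cref{asm:concentration}, so no union bound is needed.

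For the $(1-\delta)$ portion I would redo (\ref{naiveinterval}) verbatim, yielding $(1-\delta)\bigl[k \max_j |\weightedmean_j - \empmean_j| + Ck \log d/\sqrt{n}\bigr]$. For the $\delta$ portion I would observe that on the concentration event the true pair $(v_*, \truemean)$ is feasible for (\ref{precursorprogram}) at $z=1$, giving $-\mu_1 \leq \beta_1 \|\empmean\|_\infty$. Substituting $\widehat{Y}_i(1) = -s\beta_1 X_{ij^*}$ together with the identity $s \empmean_{j^*} = \|\empmean\|_\infty$ (which holds by the choice of $s$ and $j^*$) yields the telescoping relation
\begin{equation*}
\delta \beta_1 \|\empmean\|_\infty + \delta \sum_{\textrm{treated}\ i} W_i \widehat{Y}_i(1) = -\delta s \beta_1 (\weightedmean_{j^*} - \empmean_{j^*}),
\end{equation*}
whose absolute value is at most $\delta \beta_1 \max_j |\weightedmean_j - \empmean_j|$. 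In effect, the inequality $-\delta \mu_1 \leq \delta\beta_1\|\empmean\|_\infty$ can be traded for $-\delta \sum_i W_i \widehat{Y}_i(1)$ at the cost of an extra $\delta \beta_1$ contribution to the imbalance coefficient.

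The final ingredient is the Fenchel--Young inequality associated with the restricted conjugate of \Cref{sec:notation}: for any $Z \in \mathbb{R}^{n_1}$ and any $\lambda > 0$, $Z^\top W \leq \lambda \, \textrm{Divergence}(W\ ||\ U) + \lambda \, \textrm{Divergence}^*(Z/\lambda\ ||\ U)$, which is immediate from the definition of $\textrm{Divergence}^*$. Applied to $Z_i = \delta[Y_i(1) - \widehat{Y}_i(1)]$, this converts $\delta \hat{\mu}_1 - \delta \sum_i W_i \widehat{Y}_i(1)$ into exactly the divergence-plus-conjugate pair that appears in (\ref{jointbalance}). Summing the naive and PAC-Bayesian halves reassembles the $z=1$ instance of the program's objective, so $\hat{\mu}_1 - \mu_1 \leq \nu$; the analogous argument with $z=-1$ and $\mu_1 \leq \beta_{-1}\|\empmean\|_\infty$ yields $\mu_1 - \hat{\mu}_1 \leq \nu$.

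In my view the main obstacle is bookkeeping rather than any deep convex analysis: the fulcrum $\widehat{Y}_i(1)$, the sign $s$, the index $j^*$, and the objective of (\ref{precursorprogram}) must be matched exactly so that (i) the swap of $-\delta \mu_1$ for $-\delta \sum_i W_i \widehat{Y}_i(1)$ incurs only the telescoping error $\delta \beta_z \max_j |\weightedmean_j - \empmean_j|$, and (ii) this contribution combines with the naive $(1-\delta)k$ to produce the exact coefficient $((1-\delta)k + \delta \beta_z)$ of the program. Once that identity is pinned down, the Fenchel--Young step is routine and both directions of the absolute-value inequality follow on the same concentration event.
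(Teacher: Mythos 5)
Your proposal is correct and is essentially the paper's own argument in a different order of operations: the paper's Donsker--Varadhan step restricted to a single test function $\phi(i;z,\lambda,\hat v)$ with $\hat v=(1-\delta)v_*+\delta u$ is exactly your Fenchel--Young inequality applied to $Z_i=\delta[zY_i(1)-\widehat{Y}_i(1)]$ after splitting off the $(1-\delta)$ H\"older part, and you identify the same roles for the precursor program (feasibility of $(v_*,\truemean)$ giving $-zv_*^\top\truemean\le\beta_z\|\emptarget\|_\infty$), the fulcrum telescoping through $j^*$ and $s$, and the single concentration event. No gap; the two write-ups differ only in presentation.
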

In other words, the objective of (\ref{jointbalance}) is the radius of a confidence interval solving \Cref{ourtask}. $W$, $\lambda$ and $\delta$ are jointly optimized to shrink the confidence interval. This interval has guaranteed finite-sample coverage --- not just for the global solution of the program, but every feasible iterate. So, by keeping $\lambda$ and $\delta$ fixed, we obtain confidence intervals for plain covariate balancing (\ref{abstractpenaltybalancing}) as a special case.

\subsection{Intuition with KL Divergence}

At first glance, the program (\ref{jointbalance} is seemingly complicated and opaque, especially since it involves a precursor program (\ref{precursorprogram}). Let us gain some more concrete intuition for how it improves upon plain covariate balancing (\ref{abstractpenaltybalancing}), and how \Cref{thm:confidenceinterval} improves upon the naive interval (\ref{naiveinterval}). First note that, with $\delta = 0$ and $\lambda \to 0$, the naive interval is recovered. This pessimistic, oblivious bound may be useful asymptotically, if very non-uniform weights can, and must, be taken to achieve low imbalance. Larger $\delta$ enable a more optimistic, scale-sensitive analysis which explains the benefit of near-uniform weights in smaller samples.

At the other extreme $\delta = 1$, the imbalance term is scaled by $\beta_k$ rather than $k$. Typically $\beta_k \ll k$, so this scaling achieves the goal described in \Cref{sec:thechallenge}. To see when the $\beta_z$ are small, let us examine the program (\ref{precursorprogram}) which define them, focusing on the common case where $\emptarget \approx \truemean$. Without the equality constraints, the value of the program would be roughly $k$, attained by $v_*$ in the coordinate direction $-z \cdot e_{j^*}$. However, the equality constraints usually prevent this value from being attained. As the notation in this program suggests, it bounds the magnitude of $v_*^T M = \mu_1$, i.e. the average treatment outcome in the target distribution. We don't observe such outcomes, but by \Cref{asm:linearity}, we know they must be linearly consistent with the treatment outcomes we do observe. $\beta_z$ will be small when we can safely conclude that outcomes in the unseen target distribution aren't much larger than those in the data. (We have therefore controlled for the possibility that, under covariate shift, the outcomes are wildly different in the target distribution).

To improve upon the naive interval, the extra divergence-related terms cannot be too large. To gain intuition for these, consider (\ref{jointbalance}) in the specific case of the KL divergence (\ref{klandentropy}): 
\begin{align}
\label{jointklbalance}
\min_{W,\lambda,\delta} \max_{z}\  & \big((1-\delta)k + \delta \beta_z\big) \max_{j} \Big| \sum_{\textrm{treated}\ i} W_i X_{ij} - \emptarget_j \Big|\ + \nonumber\\
            & (1-\delta) (C k \log d) / \sqrt{n}\ + \\
            & \lambda\ \KL(W\ ||\ U)\ + \nonumber\\
            & \lambda \log \E{i \sim U} \exp \Big(\frac{\delta}{\lambda} (z Y_i(1) - \widehat{Y}_i(1)) \Big) \nonumber\\ 
\text{s.t.}\ & W \textrm{ is a distribution on the treatment group } \nonumber \\
            & \lambda > 0 \hspace{6mm} 0 \leq \delta \leq 1 \hspace{6mm} z \in \{-1, 1\} \nonumber
\end{align}
Recall that, as $\lambda \to 0$, the \textrm{logmeanexp} function approaches \textrm{max}. As $\lambda \to \infty$, it approaches \textrm{mean}. Without loss of generality, we can linearly center the treatment-group means of the outcomes $Y_i(1)$ and the covariate $X_{ij^*}$. So, in the benign case where $W \approx U$ achieves good imbalance, we can take $\lambda$ to be large, and keep both divergence terms near zero. In this way, the bound adapts to treatment groups which are empirically similar to the combined population.

In general, to keep imbalance low, we have to choose weights with nontrivial KL divergence. We can justify higher KL divergence if the data exhibit low variation around their mean. This makes intuitive sense: if the $Y_i(1)$ are all close to one another, then choosing extreme weights can't change $\sum_i W_i Y_i(1)$ as much. Whereas plain covariate balancing completely ignores the outcomes while choosing $W$, flexible covariable balancing makes use of these data to help optimize all the variables. This argues against the principle, advocated by \citet{rubin2008objective}, that outcome data should not affect the design of observational studies.

The bound depends on $k$, the $\ell_1$ norm of the linear function generating the outcomes. $k$ is usually unknown, but more standard regression techniques can be used to reason about it. For example, since the outcomes are assumed to be consistent with the linear function, a prospective value of $k$ is too small if it leads to nonzero training error.  

\subsection{Proof with KL Divergence}
\label{sec:proofwithkl}
The proof of \Cref{thm:confidenceinterval} is short and simple. The proof using the KL divergence contains all the ideas; extending it to all $f$-divergences, in \Cref{appx:otherfdivergences}, requires just a bit more notation. The proof begins with the Donsker-Varadhan formula for the KL divergence, which holds for all $W$: \begin{align}
\label{donskervaradhan}
\KL(W\ ||\ U) 
&= \sup_{\phi} \E{i \sim W}\ \phi(i) - \log \E{i \sim U} \exp \phi(i)
\end{align}
Here, the supremum is taken over all bounded functions $\phi$ from the set of treated $i$ to $\mathbb{R}$. Therefore, by restricting the supremum to any smaller set of functions, we get a lower bound on the KL divergence. We restrict to the following functions parameterized by $z \in \{-1, 1\}$, $\lambda > 0$ and $\hat{v} \in \mathbb{R}^d$. Recalling the notation for means in (\ref{threemeans}) and the definition of $v_*$ in \Cref{asm:linearity}:
\begin{align*}
\phi(i; z, \lambda, \hat{v}) 
&= \frac{z}{\lambda}\left( v_*^\top(X_i - \truemean) - \hat{v}^\top (X_i - \empmean )\right)
\end{align*}
By linearity of expectation: 
\begin{align*}
\E{i \sim W} \phi(i; z, \lambda, \hat{v}) 
&= \frac{z}{\lambda}v_*^\top(\weightedmean - \truemean) - \frac{z}{\lambda} \hat{v}^\top (\weightedmean - \empmean)
\end{align*}
Plugging these terms back into (\ref{donskervaradhan}) obtains, for all $W$: \begin{align*}
\KL(W\ ||\ U) \geq \sup_{z, \lambda, \hat{v}}
&\ \frac{z}{\lambda}v_*^\top(\weightedmean - \truemean)\ - \\
&\ \frac{z}{\lambda}\hat{v}^\top (\weightedmean - \empmean )\ - \\
&\ \log \E{i \sim U} \exp \phi(i; z, \lambda, \hat{v})
\end{align*}
This inequality holds no matter the choice of $z$, $\lambda$ and $\hat{v}$ in the supremum. Thus, following rearranged inequality holds for all (adaptively-chosen) $W$, $z$, $\lambda$ and $\hat{v}$:
\begin{align}
\label{basebayesinequality}
z v_*^\top(\weightedmean - \truemean) \leq
&\ 
\lambda\ \KL(W\ ||\ U)\ + \\
&\ z \hat{v}^\top (\weightedmean - \empmean )\ + \nonumber\\
&\ \lambda \log \E{i \sim U} \exp \phi(i; z, \lambda, \hat{v}) \nonumber
\end{align}
First, let's make sure that bounding the left-hand side leads to a bound on $|\hat{\mu}_1 - \mu_1|$. Recalling (\ref{linofmeans}), we just need the bound the right-hand side for both values of $z$:
\begin{align}
\label{maxoverz}
\max_{z \in \{-1,1\}}
z\ v_*^T( \weightedmean - \truemean) 
&= |\hat{\mu}_1 - \mu_1|
\end{align}
On the right-hand side of (\ref{basebayesinequality}), we will consider the worst-possible direction of $\hat{v}$ for the second term, so that we may choose it arbitrarily to minimize the third term. By the duality of the $\ell_1$ and $\ell_\infty$ norms, the second term is just the scaled imbalance: 
\begin{align*}
z \hat{v}^\top (\weightedmean - \widehat{M})
&= ||\hat{v}||_1 \frac{z \hat{v}^\top}{||\hat{v}||_1} (\weightedmean - \widehat{M}) \\
&\leq ||\hat{v}||_1 \sup_{||v||_1 \leq 1} v^\top (\weightedmean - \widehat{M}) \\
&= ||\hat{v}||_1 \max_{1 \leq j \leq d} \left| \sum_{\textrm{treated}\ i} W_i X_{ij} - \emptarget_j \right|
\end{align*} 
Let us pick $\hat{v}$ as follows: 
\begin{align*}
\hat{v} = (1-\delta) v_* + \delta u
\hspace{4mm}
\textrm{ where }
u = - z s \beta_z e_{j^*}
\end{align*}
By the triangle inequality and \Cref{asm:linearity}, we have $||\hat{v}||_1 \leq (1-\delta) k + \delta \beta_z$, so the imbalance is scaled to match (\ref{jointbalance}). Now we turn to bounding the \textrm{logmeanexp} moment term in (\ref{basebayesinequality}). Grouping terms in $\phi$ which depend on $i$:
\begin{align*}
\phi(i; z, \lambda, \hat{v}) 
&= \frac{z}{\lambda}\left( Y_i(1) - \hat{v}^\top X_i \right) +
\frac{z}{\lambda}\left(\hat{v}^\top \empmean - v_*^\top \truemean\right)
\end{align*}
The terms which don't depend on $i$ fall out of the exponent:
\begin{align*}
\lambda \log \E{i \sim U} \exp \phi
=&\ z (\hat{v}^\top \emptarget - v_*^\top \truemean)\ + \\
 &\  \lambda \log \E{i \sim U} \exp \frac{z}{\lambda}\left(Y_i(1)  - \hat{v}^\top X_i\right)
\end{align*}
The terms remaining in the exponent simplify as desired: \begin{align*}
\hat{v}^\top X_i &= (1-\delta) Y_i(1) + \delta z \widehat{Y}_i(1) \\
z(Y_i(1) - \hat{v}^\top X_i) &= \delta( z Y_i(1) - \widehat{Y}_i(1))
\end{align*}
In the linear terms that fell out, one part is bounded by H\"older's inequality and \Cref{asm:linearity,asm:concentration}:
\begin{align*}
z(\hat{v}^\top \emptarget - v_*^\top \truemean)
&= (1-\delta) z v_*^\top (\emptarget - \truemean) + z\delta (\ldots) \\
&\leq (1-\delta) C k \log(d) / \sqrt{n} + z\delta (\ldots)
\end{align*}
Finally, we show the other, elided $z\delta(\ldots)$ part is zero: 
\begin{align*}
z\delta (u^\top \emptarget - v_*^\top \truemean)
&= -\beta_z ||\emptarget||_\infty - z v_*^\top \truemean \\
&\leq -\beta_z ||\emptarget||_\infty + \beta_z ||\emptarget||_\infty = 0
\end{align*}
The first equality holds by construction of $u$. The second equality holds because the value of $\beta_z$ is given by (\ref{precursorprogram}). By \Cref{asm:linearity,asm:concentration},  that program's feasible set contains the actual $v_*$ and $\truemean$ with probability $1-\alpha$.

\section{Extensions of Main Analysis}

\subsection{Average Treatment Effects}
The following standard reduction is used to estimate ATE. This estimate still has confidence $1-\alpha$, because the concentration of \Cref{asm:concentration} has to be invoked just once.
\begin{theorem}
Suppose the treatment group weights $W$ attain value $\nu_1$ 
 in the program (\ref{jointbalance}). Let $\hat{\mu}_1$ be constructed as in (\ref{weightedestimate}). Analogously, suppose the control group weights $V$ attain value $\nu_0$ in the program (\ref{jointbalance}) when it involves reweighting the control $i$ rather than the treated $i$. Let $\hat{\mu}_0 = \sum_{\textrm{control}\ i} V_i Y_i(0)$ and $\widehat{\textrm{ATE}} = \hat{\mu}_1 - \hat{\mu}_0$. Then, under \cref{asm:linearity,asm:concentration}, with probability $1 - \alpha$,  $|\widehat{\textrm{ATE}} - \textrm{ATE}| \leq \nu_0 + \nu_1$.
\end{theorem}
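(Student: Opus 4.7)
The plan is to decompose by the triangle inequality,
\[
|\widehat{\textrm{ATE}} - \textrm{ATE}| \;=\; |(\hat{\mu}_1 - \hat{\mu}_0) - (\mu_1 - \mu_0)| \;\leq\; |\hat{\mu}_1 - \mu_1| + |\hat{\mu}_0 - \mu_0|,
\]
and then apply the argument of \Cref{thm:confidenceinterval} separately to each summand, obtaining $|\hat{\mu}_1 - \mu_1| \leq \nu_1$ and $|\hat{\mu}_0 - \mu_0| \leq \nu_0$. The only delicate point is to show that both bounds hold on the same $1-\alpha$ event, so that the total confidence is $1-\alpha$ rather than the $1-2\alpha$ incurred by a naive union bound.

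To that end, I would reinspect the proof in \Cref{sec:proofwithkl} and verify that every step is deterministic except for the final invocation of \Cref{asm:concentration}, which is used only to conclude that the actual pair $(v_*, \truemean)$ is feasible for the precursor program (\ref{precursorprogram}). This feasibility reduces to the single concentration event
\[
\mathcal{E} \;=\; \bigl\{\, \|\emptarget - \truemean\|_\infty \leq C \log(d)/\sqrt{n} \,\bigr\},
\]
which has probability at least $1-\alpha$ by \Cref{asm:concentration}. All the analytic manipulations --- Donsker--Varadhan, the choice of $\phi$ and $\hat{v}$, H\"older's inequality, and the linearity of the outcome model --- go through pointwise on $\mathcal{E}$ without any further appeal to randomness.

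The control-group estimate is produced by the analogous version of (\ref{jointbalance}) obtained by swapping $Y_i(1)$, the treated indices, and $v_*$ with $Y_i(0)$, the control indices, and $u_*$. Crucially, the target vector $\emptarget$ and the true mean $\truemean$ are unchanged, so the analogous precursor program again requires exactly the event $\mathcal{E}$ (and \Cref{asm:linearity}, which is deterministic) to certify that $(u_*, \truemean)$ is in its feasible set. Conditional on $\mathcal{E}$, both applications of the Section~\ref{sec:proofwithkl} argument succeed simultaneously and deterministically, yielding $|\hat{\mu}_1 - \mu_1| \leq \nu_1$ and $|\hat{\mu}_0 - \mu_0| \leq \nu_0$; combining with the triangle inequality gives $|\widehat{\textrm{ATE}} - \textrm{ATE}| \leq \nu_0 + \nu_1$ on $\mathcal{E}$.

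The main obstacle is really just careful bookkeeping: confirming that no hidden stochastic claim is made anywhere else in the proof of \Cref{thm:confidenceinterval}. The precursor program (\ref{precursorprogram}) is data-dependent only through the treated (respectively control) covariates and outcomes, and its feasibility requirement is always of the form ``$(v_*, \truemean)$ or $(u_*, \truemean)$ satisfies $\|\emptarget - \truemean\|_\infty \leq C \log d / \sqrt n$,'' which is exactly $\mathcal{E}$. Once this single-event factorization is verified for both the treatment and control analyses, the theorem follows.
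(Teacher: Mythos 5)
Your proof is correct and matches the paper's reasoning exactly: the paper justifies this theorem with the same triangle-inequality decomposition plus the one-line observation that \Cref{asm:concentration} ``has to be invoked just once,'' which is precisely the single-event factorization you verify in detail. Your elaboration --- checking that every other step of the \Cref{sec:proofwithkl} argument is deterministic and that both precursor programs condition on the same event $\|\emptarget - \truemean\|_\infty \leq C\log(d)/\sqrt{n}$ --- is a faithful expansion of the paper's terse argument.
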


\subsection{Asymmetric Confidence Interval}

Solving (\ref{jointbalance}) produces a single estimate $\hat{\mu}_1$ with a symmetric confidence interval. For an asymmetric (and potentially tighter) confidence interval, it is valid to solve (\ref{jointbalance}) twice, using fixed values of $z \in \{-1, 1\}$ each time. This achieves two different objective values $\nu_z$, and produces a pair of weights $W_z$ with two corresponding estimates $\hat{\mu}_{1,z}$. Then $[\hat{\mu}_{1,1} - \nu_1, \hat{\mu}_{1,-1} + \nu_{-1}]$ is a $1-\alpha$ confidence interval for $\mu_1$. This is because, for a one-sized bound, it is not necessary to take the maximum over $z$ in (\ref{maxoverz}). There is no loss in confidence because \Cref{asm:concentration} is invoked just once. 

\subsection{Consistency in High Dimensions}
\label{sec:consistency}

\begin{figure*}[ht]
  \centering
  \includegraphics[trim=0.1in 0.0in 0.1in 0.1in,clip,width=\textwidth]{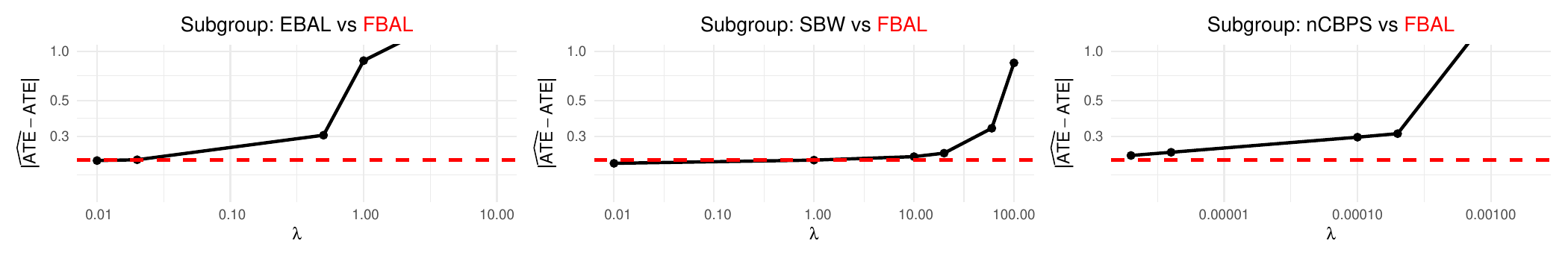}
  \includegraphics[trim=0.1in 0.1in 0.1in 0.1in,clip,width=\textwidth]{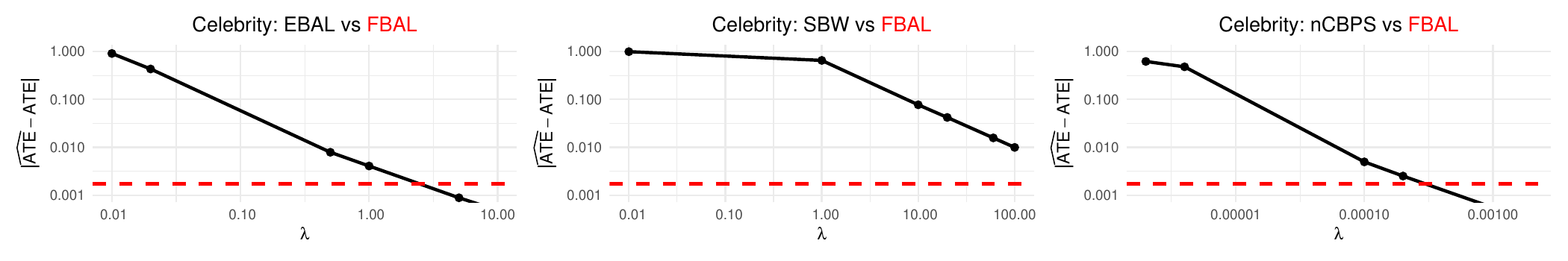}
  \vspace{-0.2in}
  \caption{Results of the \textbf{subgroup} simulation (above) and the \textbf{celebrity} simulation (below). Since the true $\textrm{ATE} = 0$ in both simulations, lower estimates are better. Flexible entropy balancing (FBAL) is compared to EBAL, SBW, and nCBPS in the left, middle, and right columns, respectively. For all of the competing methods, a fixed setting of $\lambda$ may do well in the subgroup simulation, but then that value performs badly in the celebrity simulation --- or vice versa. FBAL does not have a fixed hyperparameter, and is capable of solving both problems very well.}
  \label{fig:experimentresults}
\end{figure*}

The naive interval (\ref{naiveinterval}) shows that covariate balancing is consistent under \Cref{asm:linearity,asm:concentration,asm:optimalweights,asm:asymptotics}. Flexible covariate balancing inherits this property; see \Cref{appx:consistency} for proof. 
\begin{theorem}
\label{thm:consistency}
Let $\hat{\mu}_1$ be constructed as in (\ref{weightedestimate}) from the solution $W$ of (\ref{jointbalance}). Then, under \Cref{asm:linearity,asm:concentration,asm:asymptotics,asm:optimalweights}, $|\hat{\mu}_1 - \mu_1| \to 0$ as $n \to \infty$.
\end{theorem}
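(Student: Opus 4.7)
The plan is to invoke \Cref{thm:confidenceinterval} and exhibit a single feasible point of~(\ref{jointbalance}) whose objective value vanishes. Since the theorem guarantees $|\hat{\mu}_1 - \mu_1| \leq \nu$ with probability $1-\alpha$ for \emph{every} feasible triple $(W, \lambda, \delta)$, and the minimizer of~(\ref{jointbalance}) can only do better than any single feasible choice, it suffices to construct one convenient point at which the objective tends to zero.

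I would choose $W = W^*$ (the inverse-propensity weights of \Cref{asm:optimalweights}), $\delta = 0$, and $\lambda = \lambda_n \to 0^+$; concretely, $\lambda_n = 1/\sqrt{n_1}$ works. The key observation is that $\delta = 0$ neutralizes every $z$-dependent piece of the objective: the coefficient $(1-\delta)k + \delta\beta_z$ collapses to $k$, and $\textrm{Divergence}^*(0\ ||\ U) = \sup_W -\textrm{Divergence}(W\ ||\ U) = 0$ because the infimum of a nonnegative $f$-divergence is attained at $W = U$. The outer $\max_z$ is therefore trivial, and the objective reduces to
\[
k \max_{j} \Big| \sum_{\text{treated } i} W^*_i X_{ij} - \emptarget_j \Big|\ +\ \frac{Ck\log d}{\sqrt n}\ +\ \lambda_n\,\textrm{Divergence}(W^*\ ||\ U),
\]
which is precisely the naive interval~(\ref{naiveinterval}) augmented by a divergence penalty.

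The first two summands are controlled exactly as in the naive argument leading to~(\ref{naiveinterval}). By the triangle inequality, $||\weightedmean - \emptarget||_\infty \leq ||\sum_i W^*_i X_i - \truemean||_\infty + ||\emptarget - \truemean||_\infty$, where the first part is $O(\log d / n_1)$ by \Cref{asm:optimalweights} and the second is $O(\log d / \sqrt n)$ by \Cref{asm:concentration} (on the $1-\alpha$ concentration event). After scaling by $k$ and combining with the $Ck\log(d)/\sqrt n$ term, the total is $O(k\log d / \sqrt{n_1})$, which vanishes by \Cref{asm:asymptotics} (using $n \geq n_1$). For the divergence term, ratio-boundedness of $W^*$ forces $n_1 W^*_i \in [1/R, R]$ for every treated $i$; since $f$ is convex and finite on the compact interval $[1/R, R]$ (which excludes $0$), $f(n_1 W^*_i)$ is uniformly bounded by a constant depending only on $f$ and $R$, so $\textrm{Divergence}(W^*\ ||\ U) = O(1)$ and $\lambda_n \cdot O(1) \to 0$.

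Putting these contributions together, the chosen feasible triple achieves objective $O(k\log d / \sqrt{n_1}) + o(1)$, both parts vanishing under \Cref{asm:asymptotics}, so \Cref{thm:confidenceinterval} yields $|\hat{\mu}_1 - \mu_1| \to 0$ with probability at least $1-\alpha$. No serious obstacle appears: with $\delta = 0$ the program degenerates to a tractable form and each remaining piece is covered by a standing assumption. The only mildly delicate step is bounding $\textrm{Divergence}(W^*\ ||\ U)$ uniformly in $n$ for an arbitrary $f$-divergence, which reduces cleanly to continuity of $f$ on $[1/R, R]$ combined with the ratio bound of \Cref{asm:optimalweights}.
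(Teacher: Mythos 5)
Your proposal is correct and follows essentially the same route as the paper's own proof in \Cref{appx:consistency}: invoke \Cref{thm:confidenceinterval} at the feasible point $W = W^*$, $\delta = 0$, $\lambda \to 0$, and observe that the remaining terms are $O(k\log d/\sqrt{n_1})$, which vanish under \Cref{asm:optimalweights,asm:asymptotics}. You are in fact slightly more careful than the paper, which simply asserts that $\lambda \to 0$ ``eliminates the divergence terms,'' whereas you justify this by using ratio-boundedness to show $\textrm{Divergence}(W^*\,||\,U) = O(1)$.
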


\section{Experiments}
\label{sec:experiments}

In this section, we compare flexible covariate balancing (FBAL, \ref{jointbalance}) to the baseline algorithms described in \Cref{sec:balancingalgs}. These competitors are abbreviated as EBAL, SBW, and nCBPS (where the n denotes weight normalization). Since these methods do not produce confidence intervals with provable coverage, we use (pointwise) estimation error as the primary evaluation metric.

\subsection{Simulations}
We present two purposefully-designed simulations in which algorithms driven by a fixed tolerance (or regularization) parameter will do poorly on at least one. It is not possible to simultaneously achieve good performance on both simulations without problem-specific, data-driven adaptivity.

\textbf{1. The subgroup simulation.} Suppose the combined population is 50\% white and 50\% black. The control outcomes $Y_i(0)$ are all zero. The treatment outcome is $Y_i(1) = 1$ in the white population, but is $-1$ in the black population, so $\textrm{ATE} = 0$. The treatment group is (exactly) 95\% white and 5\% black; the control group is (exactly) 5\% white and 95\% black. Besides race (which is encoded as $10$ for white, and $-10$ for black) the rest of the $d-1$ covariates are irrelevant $\textrm{Uniform}(-1,1)$. The correct approach is to heavily upweight the underrepresented black subgroup among the (mostly white) treated, and to similarly upweight the white subgroup among the (mostly black) controls. However, an algorithm with too high $\lambda$ (or $\tau$) will hesitate to place very heavy weights on just 5\% of the group. As $n$ increases, an algorithm must allow itself to diverge from uniform and perform this reweighting, balancing the covariate for race. 

\textbf{2. The celebrity simulation.} Suppose 95\% of the participants, in both the treatment and control groups, come from the general population. All $d$ of their covariates are $\textrm{Uniform}(-10,10)$, potentially exhibiting substantial individual variation. The remaining 5\% are celebrities, whose covariates are all completely zero vectors, perfectly matching the true mean. 
It is tempting for an algorithm to achieve low imbalance from the mean by placing heavy weight on the celebrities. Unfortunately, the celebrities are very atypical in the sense that, just for them, the treatment effect is substantially positive, equaling $1$. (This makes the simulation misspecified relative to \Cref{asm:linearity}). 
For the general population, the effect is mildly negative at $-0.05 / 0.95$, which renders the overall ATE null. As $n$ increases, the general population mean converges to zero, so the benefit of emphasizing the celebrities dissipates. The most accurate algorithm will keep uniform weights over the population to obtain a null estimate. An algorithm with too low $\lambda$ (or $\tau$) will not do this as well. 

So, heavily weighting the 5\% is beneficial in the first simulation, and it is harmful in the second. The algorithm is not allowed to be told, implicitly via problem-specific regularization parameters, which scenario is active. To be successful, it must judge different degrees of tradeoff between imbalance and divergence based upon the data. As \Cref{fig:experimentresults} illustrates, flexible covariate balancing is able to solve this challenge, because it is built from a more complete theory of how this tradeoff works.

\subsection{Real Trial Emulations}
\label{sec:realdataexperiments}

\begin{figure*}[ht]
  \centering
  \includegraphics[trim=0.1in 0.0in 0.1in 0.1in,clip,width=\textwidth]{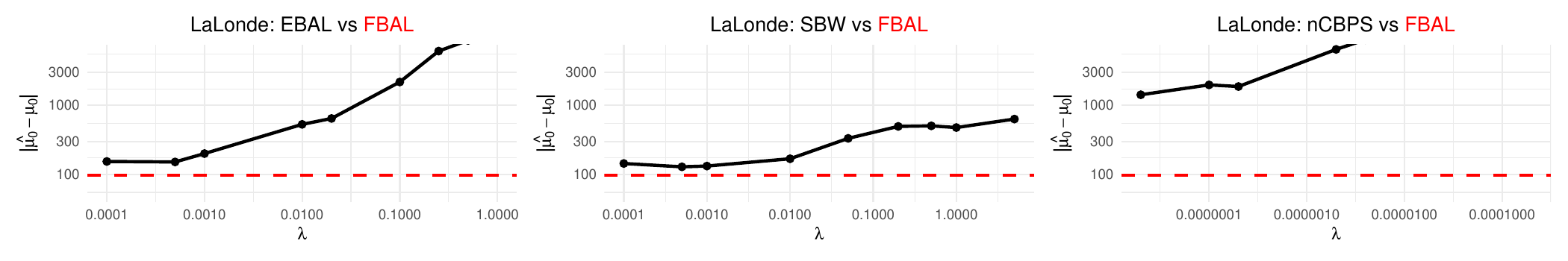}
  \includegraphics[trim=0.1in 0.1in 0.1in 0.1in,clip,width=\textwidth]{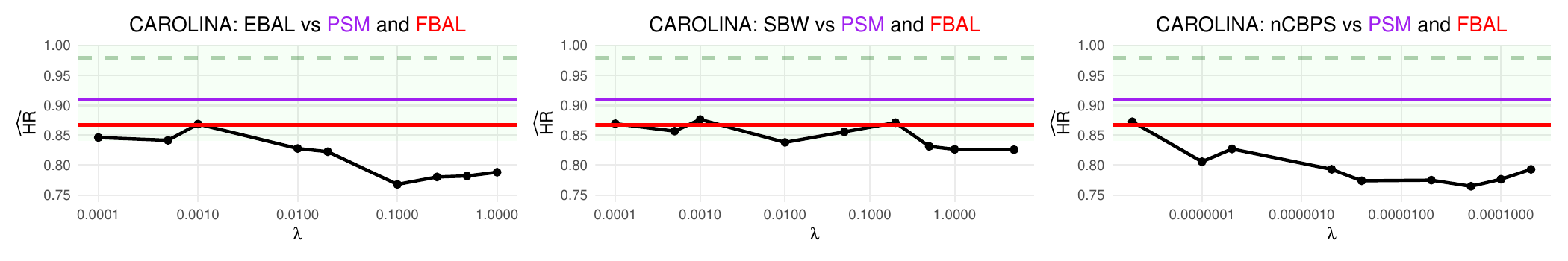}
  \vspace{-0.2in}
  \caption{Results of the \textbf{LaLonde} emulation (above) and the \textbf{CAROLINA} emulation (below). In the LaLonde plot, the error of the estimate $\hat{\mu}_0$ is on the $y$-axis, so lower values are better. For CAROLINA, higher estimates of HR are better: the trial's observed HR of 0.98 is plotted as a dashed green line, and its confidence interval of $[0.84-1.14]$ is shaded in light green. FBAL performs better than, or equal to, all the other covariate balancing algorithms. Propensity score matching (PSM) performs better than FBAL; however, it was performed on a separate dataset which is an order of magnitude larger, with careful manual oversight of the matching and covariate balance \citep{patorno2019using}. }
  \label{fig:realdataexperimentresults}
\end{figure*}

Both of the following experiments involve using observational data to emulate a randomized controlled trial. By treating an RCT as (approximate) ground truth, we avoid generating synthetic ground truth. That practice, though widespread in causal inference benchmarking, can be controversial and potentially misleading
\citep{curth2021really}.  

\textbf{1. Random-Fourier LaLonde.} This experiment is based on 
the dataset of \citet{lalonde1986evaluating}, which is a hobbyhorse in causal inference \citep{imbens2024lalonde}. The intervention in this dataset is a job training program, and the outcome is personal annual income. This dataset includes a randomized trial whose control group consists of $n = 260$ participants. By averaging their outcomes, we can reliably estimate the control response of \emph{not} participating in a job program, and informally treat it as ground truth: $\mu_0 = 4554.80$. The dataset also includes an observational control group with $n_0 = 15,992$ participants. Both the experimental and observational participants have $9$ covariates. To make this data more algorithmically involved, we expand $d = 16,000$ random Fourier features from the original $9$ \citep{rahimi2008weighted}. Then, we use different balancing algorithms to estimate $\mu_0$ (as $\hat{\mu}_0$) by reweighting the observational control.

The results are depicted in \Cref{fig:realdataexperimentresults}. As observed in previous works \citep{hainmueller2012entropy}, $\lambda \approx 0$ works well on this dataset; this setting is determined automatically by FBAL.

\textbf{2. The CAROLINA trial.} Between 2010 and 2018, \citet{ rosenstock2019effect} conducted a randomized controlled trial comparing linagliptin (a new drug) and glimepiride (a baseline drug) for patients with type 2 diabetes and elevated cardiovascular risk. Its primary outcome was the occurrence of a major cardiovascular event (myocardial infarction, nonfatal stroke, and/or cardiovascular death). From $n = 6033$ randomized patients, they estimated a hazard ratio (HR) of 0.98, with a 95\% confidence interval $[0.84, 1.14]$. Thus, the trial reasonably concluded that linagliptin's cardiovascular safety profile is noninferior to that of glimepiride.

While this trial was ongoing, \cite{patorno2019using} attempted to predict its result using observational data and propensity score matching. Based on the initially-published protocol and baseline statistics of the trial \citep{marx2015design}, \cite{patorno2019using} propensity-score matched 24,131 patients, for a total of 48,262 participants. They predicted $\widehat{\textrm{HR}} = 0.91$, which fell within the confidence interval eventually reported by the trial. Our goal is to do the same using covariate balancing algorithms, upon observational data drawn from a single institution within FeederNet \citep{you2022establishment}.

Our cohort for this emulation involves $n_0 = 3739$ patients given glimepiride and $n_1 = 488$ patients given linagliptin. There are 12,900 covariates, but most of them are very sparse; after removing covariates which are are over 50\% zero, $208$ covariates remain. (As shown in \Cref{fig:worsecarolina} in the Appendix, keeping these sparse features led to poor results from all the covariate balancing algorithms). From these 208 covariates, we extracted $d = 5000$ random Fourier features. In order to estimate HR, the learned weights are used to fit weighted Cox models upon the time-to-event data.

Our results are displayed in \Cref{fig:realdataexperimentresults}. As hoped, FBAL's estimate $\widehat{\textrm{HR}} = 0.867$ lies within the 95\% CI of the trial. By contrast, the other covariate balancing algorithms tended to predict outside the CI, depending on the setting of $\lambda$. However, this experiment may also indicate some limitations of covariate balancing as an algorithmic strategy. All the covariate balancing algorithms required some separate feature extraction, and they were still outperformed by propensity score matching. In covariate balancing, the maximum imbalance is fundamentally susceptible to the inclusion of irrelevant and/or corrupted features.

\section{Conclusion}

This paper initiates the PAC-Bayesian analysis of covariate balancing. It presents the first confidence interval which explains the generalization benefits of near-uniform weights in this setting. This theoretical result illuminates the tradeoff between imbalance and divergence, internalizes the complexities of covariate shift, offers an alternative proof of asymptotic consistency, and gives rise to a flexible new balancing algorithm with optimizable regularization parameters. This algorithm empirically demonstrates more adaptivity than existing methods. It would be interesting to pair our new balancing algorithm with a regression or propensity score model, much as \citet{athey2018approximate} pair stable balancing weights with the elastic net. 

Aside from causal inference, our techniques are of possible interest to the machine learning community in general. Rigorously, simultaneously optimizing both weights and hyperparameters is an interesting topic for further research. We achieved this through a uniform distribution on exchangeable data, and many linear learning problems can be dually expressed as (unnormalized, signed) weights over such data.  

\newpage
\bibliography{main}

\newpage
\clearpage
\appendix

\begin{figure*}[t!]
  \centering
  \includegraphics[trim=0.1in 2.2in 0.1in 2.2in,clip,width=\textwidth]{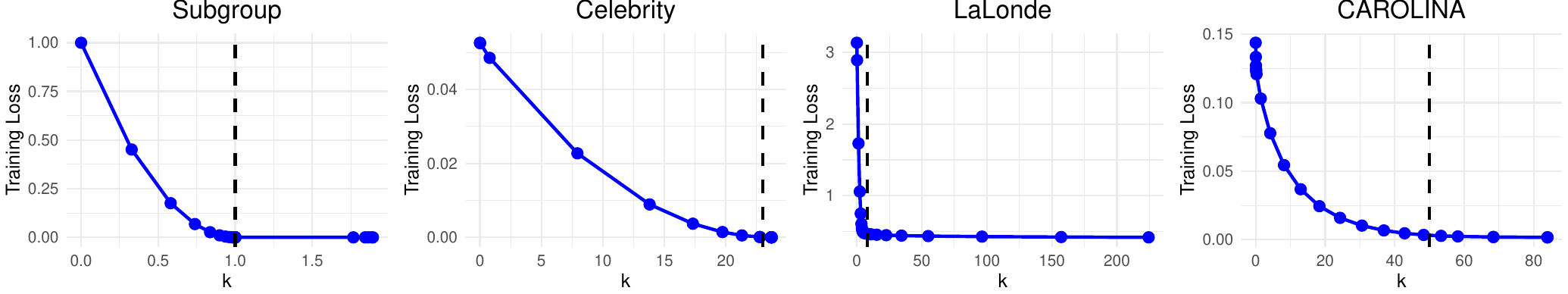}
  \caption{Residual bend analyses used to choose $k$ in the experiments. For each experiment, the observed data $(X, Y)$ are fit by the Lasso \citep{tibshirani1996regression}, for different regularization values $\lambda$. $k$ is chosen as the $\ell_1$ norm where the training loss begins to depart sharply from zero. This is a somewhat imprecise heuristic, but seems to work reasonably in practice.}
  \label{fig:residualbends}
\end{figure*}

\begin{figure*}[t!]
  \centering
  \includegraphics[trim=0.1in 0.1in 0.1in 0.1in,clip,width=\textwidth]{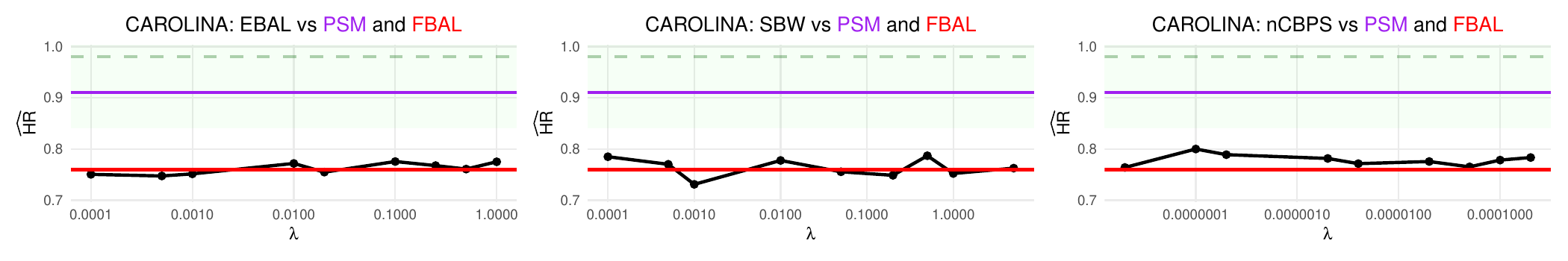}
  \vspace{-0.2in}
  \caption{Worse results for the \textbf{CAROLINA} emulation (below) when sparse features are not removed. A naive, unweighted Cox model yields $\widehat{HR} = 0.793$, so all of the balancing methods have essentially trivial performance. This makes intuitive sense: covariate balancing methods must balance \emph{all} the covariates, so including noisy or meaningless covariates greatly impacts their performance. By contrast, irrelevant features can be automatically ignored in propensity score estimation --- though they are still (manually) inspected and adjudicated in matching. }
  \label{fig:worsecarolina}
\end{figure*}

\section{$f$-divergence for CBPS}
\label{appx:cbps}

In this section, we present the $f$-divergence which places the covariate balancing propensity score \citep{imai2014covariate} in the form of (\ref{abstractbalancing}).
\begin{lemma}[CBPS]
\label{lem:cbpsdivergence}
Suppose there is a constant covariate (i.e. an index $j$ such that $X_{ij} = \emptarget_j$ for all $i$). Then CBPS, when formulated to estimate $\mu_1$, is equivalent to (\ref{abstractbalancing}) with $\tau = 0$ and the following $f$-divergence away from uniform: \begin{align*}
W \mapsto \frac{1}{n_1} \sum_{\textrm{treated}\ i} \Big(\frac{1}{W_i}  - 1\Big) \log\Big(\frac{1}{W_i} - 1\Big) - \frac{1}{W_i} + c
\end{align*}
In this divergence, the constant $c$ is defined in (\ref{cforcbps}).
\end{lemma}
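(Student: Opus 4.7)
The plan is to invoke the convex-duality framework of \citet{josey2021framework}: each covariate-balancing method corresponds to a constrained minimization of an $f$-divergence whose convex conjugate encodes the propensity link. The task here is to identify $f$ from the shape of the CBPS link and to pin down the additive constant $c$.

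First I would make CBPS precise under the constant-covariate hypothesis. Following \citet{sloczynski2023covariate}, the balance condition associated with the constant covariate forces $\sum_{\textrm{treated } i}(1/p_i) = n$, so the H{\'a}jek-normalized weights reduce to $W_i = 1/(n\, p_i)$; under the logistic model $p_i = \sigma(\theta^\top X_i)$ this yields a direct link between $1/W_i - 1$ and $\theta^\top X_i$. Then I would set up the KKT system of (\ref{abstractbalancing}) with $\tau = 0$ and a generic $f$-divergence $\frac{1}{n_1}\sum_i f(n_1 W_i)$: stationarity gives $f'(n_1 W_i) = \theta^\top X_i$ at every treated $i$, where $\theta$ collects the Lagrange multipliers for the balance constraints.

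Next I would verify, by direct differentiation of the proposed $f$ under the substitution $u = 1/W_i - 1 > 0$, that $f'(n_1 W_i)$ is proportional to $\log(1/W_i - 1)$, namely the CBPS linear predictor identified in the previous step. Any leftover multiplicative and additive rescaling is absorbed by the Lagrange multiplier corresponding to the constant covariate, which is exactly why the constant-covariate hypothesis is invoked. This identifies the proposed $f$ as the primitive whose constrained minimization reproduces CBPS, and the constant $c$ is then pinned down by $f(1) = 0$, so that the divergence vanishes at $W = U$, yielding the explicit value in (\ref{cforcbps}). Uniqueness of the CBPS solution (just-identified by $d$ balance equations and $d$ logistic parameters under standard full-rank conditions on the design) and of the strictly convex min-divergence solution, together with the matched KKT systems, then yields their equivalence.

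The main obstacle is the absorption step: the H{\'a}jek normalization $Z = \sum_j 1/p_j$ depends on $\theta$ and at first sight cannot be captured by a pointwise $f$ acting on $W_i$ alone. The content of the lemma is precisely that under the constant-covariate hypothesis this $\theta$-dependence collapses into an intercept shift of the dual variable, leaving a fixed pointwise ``shifted KL'' divergence. Rigorously tracking this absorption, while simultaneously verifying the standard $f$-divergence conditions (convexity on the relevant interval, $f(1) = 0$, and the correct limiting behavior at the boundary of the domain), completes the proof.
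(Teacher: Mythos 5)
Your proposal takes a genuinely different route from the paper, and the route has a real gap. The paper does not match KKT systems at all: it takes the primal objective $\textrm{CBPS}(p)=\sum_{i}(p_i-1)\log(p_i-1)-p_i+2$ directly from \citet{josey2021framework}, performs the inversion $W_i=1/p_i$ (the constant covariate guaranteeing $\sum_i W_i=1$, so the feasible sets coincide), observes that the resulting expression equals $\tfrac{1}{n_1}\sum_i f(n_1 W_i)$ up to constants that do not affect the argmin, and then spends essentially all of its effort verifying that $f$ is a legitimate $f$-divergence generator. Your link-matching step, by contrast, fails as stated. Writing $f(t)=h\log h-h-1+c$ with $h=n_1/t-1$, the chain rule gives
\begin{align*}
f'(t)=-\frac{n_1}{t^2}\log\Bigl(\frac{n_1}{t}-1\Bigr),
\qquad\text{hence}\qquad
f'(n_1W_i)=-\frac{1}{n_1W_i^2}\log\Bigl(\frac{1}{W_i}-1\Bigr),
\end{align*}
which is \emph{not} proportional to $\log(1/W_i-1)$: the prefactor $1/(n_1W_i^2)$ is an $i$-dependent multiplicative factor. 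The Lagrange multiplier of the constant covariate only shifts $\theta^\top X_i$ by an additive, $i$-independent intercept, so it cannot absorb this factor; the ``absorption step'' you flag as the main obstacle is in fact an obstruction to the whole strategy as you have set it up. (The same issue reappears if you differentiate in $p$-coordinates, because the balance constraints are then on $1/p_i$ and their gradients carry $1/p_i^2$ factors.) The equivalence should be argued at the level of objectives and feasible sets, as the paper does, rather than by matching first-order conditions to the logistic link.

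Separately, you relegate to a parenthetical the part that carries the paper's technical weight: convexity of $f$ on $[0,n_1]$. This is needed both for the map to be an $f$-divergence and for the strict-convexity/uniqueness argument your own proof relies on, and it is not routine, since $f$ is a difference of terms. The paper reduces $f''\ge 0$ to showing $g(s)=2\log\bigl((1-s)/s\bigr)+1/(1-s)\ge 0$ on $[0,1]$ by locating its unique critical point at $s=2/3$ and checking the value there. Your determination of $c$ via $f(1)=0$ is correct and matches (\ref{cforcbps}); the normalization you state ($\sum_i 1/p_i=n$ rather than $1$) is a harmless convention mismatch.
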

CBPS fits a logistic regression to the propensity scores, but constrains the fitted scores to satisfy covariate balancing constraints. When estimating $\mu_1$, these constraints are:
\begin{align*}
\sum_{\textrm{treated}\ i} \frac{1}{p_i} X_{ij} = \emptarget_j \quad \forall\ 1 \leq j \leq d
\end{align*}
Here, $p_i \in [1, \infty)$ is the fitted propensity score given $X_i$. As highlighted by \citet{sloczynski2023covariate}, if there is a constant covariate, this normalizes the inverse-propensity scores: \[
\sum_{\textrm{treated}\ i} \frac{1}{p_i} = 1
\]
As described by \citet{josey2021framework}, the CBPS primal objective can be rephrased to minimize over the $p_i$ themselves, rather than the logistic regression coefficients: \begin{align*}
\textrm{CBPS}(p) &= \sum_{\textrm{treated}\ i} (p_i - 1) \log(p_i - 1) - p_i + 2
\end{align*}
(Their displayed formula actually includes target weights $q_i$, but these are set uniformly to $q_i = 2$). Since (\ref{abstractbalancing}) optimizes over probability weights, we perform the inversion $W_i = 1 / p_i$. Note that the two aforementioned constraints on the $p_i$ correctly translate to $W$ being a probability distribution. 
In analogy to the CBPS objective, let us define the $f$-divergence away from the uniform distribution using the following function:
\[
f(t) = \left(\frac{n_1}{t} - 1\right) \log\left(\frac{n_1}{t} - 1\right) - \frac{n_1}{t} + c
\]  
Here, the constant $c$ is chosen to make $f(1) = 0$: 
\begin{align}
\label{cforcbps}
c = -(n_1 - 1)\log(n_1 - 1) + n_1
\end{align}
It is easy to see that the divergence (\ref{fdivergence}) matches the CBPS objective, up to constants:
\begin{align*}
\textrm{Divergence}(W\ ||\ U)
&= \sum_{\textrm{treated}\ i} \frac{1}{n_1} f(n_1 \cdot W_i) \\
&= \frac{1}{n_1} \sum_{\textrm{treated}\ i}  f\left(\frac{n_1}{p_i}\right) 
\propto \textrm{CBPS}(p)
\end{align*}
We simply need to show that this $f$ satisfies the necessary conditions to define an $f$-divergence, per \Cref{sec:notation}. As mentioned before, $f(1) = 0$. Next, note $\lim_{t\to 0^+} f(t)$ does indeed grow to $f(0) = \infty$. Finally, we need to show $f$ is convex. When used to define a divergence away from the uniform distribution, $f$ is evaluated only over $t \in [0,n_1]$. Thus, for \Cref{lem:cbpsdivergence}, $f$ needs to be convex just over that range. Its second derivative is:
\begin{align*}
f''(t) 
&= \frac{n_1}{t^3} \left( 2 \log\left(\frac{n_1 - t}{t}\right) + \frac{n_1}{n_1 - t} \right) \\
&= \frac{n_1}{t^3} \underbrace{\left( 2 \log\left(\frac{1 - s}{s}\right) + \frac{1}{1 - s} \right)}_{g(s)} 
\end{align*}
The second line substitutes \( s = t / n_1 \). for $s \in [0,1]$. To prove $f$ is convex, we just need to show $g(s) \geq 0$ on $[0,1]$. As \( s \to 0 \) or \( s \to 1 \), \( g(s) \to +\infty \). Now we show $g$ is nonnegative within those endpoints by finding its sole critical point, showing it is a local minimum, and that $g$ remains nonnegative there. Taking derivatives: 
\begin{align*}
g'(s) &= \frac{-2}{s(1 - s)} + \frac{1}{(1 - s)^2} \\
g''(s) &= \frac{2(1 - 2s)}{s^2(1 - s)^2} + \frac{2}{(1 - s)^3}
\end{align*}
Solving $g'(s) = 0$, the unique critical point is at $s_{\textrm{min}} = 2/3$. We find $g''(s_{\textrm{min}}) > 0$ and $g(s_{\textrm{min}}) > 0$. This completes the proof of convexity.

\section{Additional Related Work}
\label{appx:relatedwork}

\subsection{Modeling Algorithms}

Another approach to causal inference is modeling the propensity score. This is the probability $e(x) = \mathbb{P}(T = 1\ |\ X = x)$ that a participant having covariates $x$ is assigned to the treatment group. Such modeling is often performed using $\ell_1$-regularized logistic regression \citep{tibshirani1996regression}. Though the modeling approach is superficially distinct from covariate balancing, they are mathematically intertwined \citep{ben2021balancing}. The inverse propensity score produces weights $W^*_i = 1 / e(X_i)$ which balance all covariate means. In fact, these are the unique weights which balance means of all bounded functions of the covariates. In terms of optimization, balancing and modeling are linked via convex duality. 

Due to these close connections, some algorithms attempt to simultaneously model and balance. As previously discussed, CBPS \citep{imai2014covariate} fits a logistic regression propensity score while constraining the inverse propensity scores (i.e. the weights) to balance all covariate means. 
Its successor HDCBPS \citep{ning2020robust} does not attempt to balance all covariates, but focuses on balancing only covariates which are relevant to the outcomes. This is akin to how Differentiated Confounder Balancing \citep{kuang2017estimating} learns which subset of covariates are confounders along with the weights $W$ balancing that subset. 

Matching \citep{rosenbaum1983central} is another way to use the propensity score. This pairs each member of the treatment group with a control who has approximately the same propensity score; otherwise, if no such control is found, the member is dropped. Because this method may not use all of the data, it can be problematic when $n$ is relatively small. Furthermore, if the minor differences between matched pairs accumulate in the same direction, substantial bias can arise; overall covariate imbalance still needs to be separately adjudicated. 

\subsection{Reinforcement Learning}

The observational study setup in this paper is a special case of off-policy evaluation in reinforcement learning \citep{uehara2022review}. More specifically, it is a special case of behavior-agnostic off-policy evaluation in contextual bandits. In our setting, the states are covariates, the two actions are assignment to either treatment or control, the historical policy is the unknown selection mechanism (i.e. the propensity score) in the observational data, and ATE is the value of the policy which chooses assignment randomly. In this more difficult and general setting, finite-sample confidence intervals \citep{dai2020coindice} have been derived. However, it is unclear whether they improve the naive interval (\ref{naiveinterval}) when specialized to observational studies. Subsequent works on off-policy confidence sequences \citep{karampatziakis2021off, waudby2024anytime} assume a known behavior policy (i.e. propensity score), which makes them inapplicable to our setting. 

\section{Existence of $W^*$}
\label{appx:existenceofipw}

This section describes why \Cref{asm:optimalweights} is a consequence of standard assumptions in causal inference. The two following assumptions are jointly known as strong ignorability \citep{rosenbaum1983central}. 
\begin{assumption}[Unconfoundedness]
\label{asm:unconfounded} For all $i \in \{1, \ldots, n\}$, 
$T_i \perp Y_i(0), Y_i(1)\ |\ X_i$. (Given the covariates, the treatment selection is independent of both potential outcomes). 
\end{assumption}
\begin{assumption}[Overlap]
\label{asm:overlap}
There is a constant $\gamma > 0$ such that, for all $x$, $\gamma < \mathbb{P}(T_i = 1\ |\ X = x) < 1 - \gamma$. (Each participant has a positive probability of being assigned to either the treatment or the control).
\end{assumption}
For the moment, let us assume $\mathcal{S} = \mathcal{T}$ (i.e. the source and target distributions are the same). Define $W^*$ as the true (unknown) inverse-propensity weights: \begin{align}
\label{inversepropensityweights}
W^*_i \propto \frac{1}{\mathbb{P}(T_i = 1\ |\ X = X_i)}
\hspace{5mm}
\sum_{\textrm{treated}\ i} W^*_i = 1
\end{align}
Under \Cref{asm:unconfounded,asm:overlap}, it is a standard result that $\sum_{\textrm{treated}\ i} W^*_i Y_i(1)$ has mean $M$ (see e.g. \citet{ben2021balancing}). Thus, by the central limit theorem, its error is $O(1/\sqrt{n})$. Next, we show $W^*$ satisfies the ratio-boundedness property of \Cref{asm:optimalweights}.
\begin{lemma}
Under \Cref{asm:overlap}:
\[
R
= 
\frac{\max_i W^*_i}{\min_i W^*_i}
\;\le\;
\biggl(\frac{1-\gamma}{\gamma}\biggr)^2.
\]
\end{lemma}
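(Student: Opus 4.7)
The plan is to write $W^*_i$ explicitly as $1/(Z \cdot e(X_i))$, where $e(X_i) = \mathbb{P}(T_i = 1 \mid X = X_i)$ is the true propensity score and $Z = \sum_{\textrm{treated}\ j} 1/e(X_j)$ is the normalizing factor forced by the unit-sum condition in (\ref{inversepropensityweights}). Once $W^*_i$ is in this form, the overlap assumption controls both factors separately.

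First I would invoke \Cref{asm:overlap}, which says $\gamma < e(X_j) < 1-\gamma$ for every $j$. This gives the pointwise bounds $1/(1-\gamma) < 1/e(X_j) < 1/\gamma$, and summing over the $n_1$ treated units yields $n_1/(1-\gamma) < Z < n_1/\gamma$. Combining the lower bound $e(X_i) > \gamma$ with the lower bound on $Z$ produces an upper bound $\max_i W^*_i \le (1-\gamma)/(n_1\gamma)$; combining the upper bound $e(X_i) < 1-\gamma$ with the upper bound on $Z$ produces a lower bound $\min_i W^*_i \ge \gamma/(n_1(1-\gamma))$. Taking the ratio, the factors of $n_1$ cancel and each of the two inequalities contributes one factor of $(1-\gamma)/\gamma$, yielding the claimed $((1-\gamma)/\gamma)^2$.

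There is no real obstacle here; the argument is a purely algebraic application of \Cref{asm:overlap}. I would note that a tighter bound of $(1-\gamma)/\gamma$ is in fact available, because $Z$ cancels directly in the ratio $\max_i W^*_i / \min_i W^*_i = \max_i e(X_i)/\min_i e(X_i)$, but the stated looser bound already suffices for the downstream use in justifying \Cref{asm:optimalweights}, and it follows most transparently from the bound-each-extreme-separately approach above.
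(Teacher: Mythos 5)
Your proof is correct and follows essentially the same route as the paper's: write $W^*_i = 1/(p_i Z)$ with $Z=\sum_i 1/p_i$, bound $Z$ between $n_1/(1-\gamma)$ and $n_1/\gamma$ via \Cref{asm:overlap}, bound each extreme of $W^*_i$ separately, and take the ratio so the $n_1$ cancels. Your side remark that $Z$ cancels outright, giving the sharper bound $\max_i W^*_i/\min_i W^*_i = \max_i p_i^{-1} / \min_i p_i^{-1} \le (1-\gamma)/\gamma$, is also correct and slightly improves on the paper's stated constant, though nothing downstream requires it.
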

\begin{proof}
Abbreviate $W = W_*$, $n = n_1$ and $p_i = \mathbb{P}(T_i = 1\ |\ X = X_i)$, so $W_i = \frac{p_i}{Z}$ where $Z = \sum_i \frac{1}{p_i}$. By \Cref{asm:overlap}, each \(p_i \in (\gamma, 1-\gamma)\). Thus, we have:
\[
\frac{1}{1-\gamma} \;\le\; \frac{1}{p_i} \;\le\; \frac{1}{\gamma}.
\]
Therefore, taking sums:
\[
\frac{n}{1-\gamma} 
\;\le\; 
Z 
\;\le\; 
\frac{n}{\gamma}.
\]
Hence each component of \(W\) lies within this range: 
\begin{align*}
W_i 
\;=\;
\frac{1}{p_i\,Z}
\;\in\;
&
\biggl[\,
\frac{1}{\bigl(\tfrac{1}{\gamma}\bigr)\,\bigl(\tfrac{n}{\gamma}\bigr)},
\;\;
\frac{1}{\bigl(\tfrac{1}{1-\gamma}\bigr)\,\bigl(\tfrac{n}{1-\gamma}\bigr)}
\biggr] \\
\;=\;
&
\biggl[\,
\frac{\gamma}{n(1-\gamma)},
\;\;
\frac{\,1-\gamma\,}{n\,\gamma}
\biggr]
\end{align*}
Dividing these by one another, we see the ratio of largest to smallest $W_i$ is bounded.
\end{proof}
Now let us consider the case when $\mathcal{S} \neq \mathcal{T}$. In the literature studying this case, it is common to assume a positive participation probability, i.e. that all participants in $\mathcal{T}$ have some chance of being sampled in $\mathcal{S}$. To ensure ratio-boundedness of $W^*$, we can make the following, somewhat stronger assumption.
\begin{assumption}[Bounded density ratio]
There are constants $r_{0}, r_\infty > 0$ such that $r_0 \leq \frac{d\mathcal{S}(x)}{d \mathcal{T}(x)} \leq r_\infty$. 
\end{assumption}
Then, $W^*$ just needs to be adjusted by this ratio, and the standard arguments go through as before. 


\section{Linear Program for $\beta_z$}
\label{appx:linearprogram}

This section shows that (\ref{precursorprogram}) can be reformulated as a linear program. (Strictly speaking, this linear program is an upper bound, but it will usually be exactly tight). For readability, we copy (\ref{precursorprogram}) here: 
\begin{align*}
\max_{v_*, \truemean \in \mathbb{R}^d} & \quad \frac{-z}{||\emptarget||_\infty} v_*^\top M \\
\text{s.t.} & \quad ||v_*||_1 \leq k \hspace{6mm} ||\emptarget - M||_\infty \leq \frac{C \log d}{\sqrt{n}} \nonumber\\
            & \quad v_*^\top X_i = Y_i(1) \hspace{2mm} \textrm{for all treated}\ i \nonumber
\end{align*}
The first step is to eliminate the optimization over $\truemean$. For any fixed \(v_*\), the objective is maximized at \( \truemean = \emptarget - z \frac{C \log d}{\sqrt{n}} \text{sign}(v_*) \). (For example, in case $z = -1$, this perturbs $\emptarget$ in the direction of $\text{sign}(v_*)$. Using this optimal $M$, along with the the fact that \( v_*^\top \text{sign}(v_*) = \|v_*\|_1 \), the objective becomes just a maximization over $v_*$:  
\[
\frac{1}{||\emptarget||_\infty}\left( -z v_*^\top \widehat{M} + \frac{C \log d}{\sqrt{n}} \|v_*\|_1 \right)
\]
The $\ell_1$ norm is not concave, but this is not a problem. The constraint $||v_*||_1 \leq k$ is typically active (i.e. the sparsity constraint makes it more challenging to fit the data). This is trivially the case if there is an all-zero covariate. Thus, we can replace the non-concave $||v_*1||$ by just $k$. 

Now the $\ell_1$ norm constraint is linearized in the usual fashion. Let \( v_* = u - w \) with \( u, w \geq 0 \). Then \( \|v_*\|_1 = \sum_{i=1}^d (u_i + w_i)\) and  \( v_*^\top X_i = (u - w)^\top X_i\). The final program is:
\begin{align*}
\max_{u,v \in \mathbb{R}^d} & \quad \frac{1}{||\emptarget||_\infty}\left( \frac{C k \log d}{\sqrt{n}} - z \sum_{j=1}^d \emptarget_j (u_j - w_j)  \right) \\
\text{s.t.} & \quad \sum_{j=1}^d u_j + w_j \leq k \hspace{6mm} u_j,v_j \geq 0 \nonumber\\
            & \quad (u-w)^\top X_i = Y_i(1) \hspace{2mm} \textrm{for all treated}\ i \nonumber
\end{align*}
The potential for $||\emptarget||_\infty \approx 0$ may make this program seem ominous. However, in the linear setting of \Cref{asm:linearity}, is typically the case that a constant covariate is added. It is appropriate to think of $||\emptarget||_\infty = 1$ in our setting.

\section{General $f$-Divergences}
\label{appx:otherfdivergences}

This section explains how to extend the proof of \Cref{sec:proofwithkl} to all $f$-divergences, thereby establishing \Cref{thm:confidenceinterval} in full generality. This elaboration is very straightforward. Instead of beginning with the Donsker-Varadhan formula (\ref{donskervaradhan}), we use the more abstract variational representation of $f$-divergences \citep{ruderman2012tighter}: \begin{align*}
\textrm{Divergence}(W\ ||\ U)
= \sup_\phi & \E{i \sim W} \phi(i)\ - \\
            & \textrm{Divergence}^*([\phi(i)]_{\textrm{treated}\ i}\ ||\ U)
\end{align*}
The proof remains the same until the part where terms which don't depend on $i$ drop out of the \textrm{logmeanexp}. It is easy to see that the same equality holds for $\textrm{Divergence}^*$. Let $z_0 \in \mathbb{R}$ be an additive term constant across all the $i$. Because $W$ is a probability distribution:
\begin{align*}
 &\ \textrm{Divergence}^*(Z + \mathbf{1} z_0\ ||\ U) \\
=&\ \sup_{W} (Z + \mathbf{1} z_0)^\top W - \textrm{Divergence}(W\ ||\ U) \\
=&\ \sup_{W} z_0 \mathbf{1}^\top W + Z^\top W - \textrm{Divergence}(W\ ||\ U) \\
=&\ z_0 + \textrm{Divergence}^*(Z\ ||\ U)
\end{align*}
This completes the generalization of the proof. Let us see how it can be used to derive another covariate balancing algorithm. Consider the $\chi^2$ divergence and its conjugate \citep{ohnishi2021novel}: 
\begin{align}
\label{sfbal}
\chi^2(W\ ||\ U)
&= \sum_{\textrm{treated } i} \Big(W_i - \frac{1}{n_1}\Big)^2 \\
\chi^{2*}(Z\ ||\ U)
&= \E{i\sim U} Z_i - \frac{1}{4}\Var{i \sim U}\ Z_i \nonumber 
\end{align}
Using these in (\ref{jointbalance}) yields a flexible analogue of stable balancing weights \citep{zubizarreta2015stable}. As discussed in \Cref{sec:proofwithkl}, the data can be centered, without loss of generality, to eliminate the mean term. This instantiation of (\ref{jointbalance}) makes it clear that $\textrm{Divergence}^*$ quantifies empirical variation of the data. 

\section{Proof of Consistency}
\label{appx:consistency}

\Cref{asm:linearity,asm:concentration} secure the finite-sample guarantee in \Cref{thm:confidenceinterval}. Take $W = W^*$, $\delta = 0$ and $\lambda \to 0$. (The actual solution to (\ref{jointklbalance}) may be better, but this only helps our argument). Setting $\lambda \to 0$ eliminates the divergence terms. The remaining terms are $O(k \log(d) / \sqrt{n})$ which, by \Cref{asm:optimalweights,asm:asymptotics}, goes to zero.

\section{Experiment Details}

\textbf{Algorithms}. In most of the experiments, FBAL is flexible entropy balancing (\ref{jointklbalance}). The CAROLINA experiment involves the flexible analogue of stable balancing weights, using the $\chi^2$ divergence described in (\ref{sfbal}). $k$ is not manually set, but is chosen from the data according to the the residual bend analysis depicted in \Cref{fig:residualbends}. To handle $n < d$, we implement EBAL, SBW, and nCBPS ourselves, minimizing the penalized formulation (\ref{abstractpenaltybalancing}) using gradient descent.

\textbf{Optimization}. With fixed $\lambda$, the standard covariate balancing program (\ref{abstractpenaltybalancing}) is convex. Flexible covariate balancing (\ref{jointbalance}) is not convex, due to the product terms. We attempt to optimize (\ref{jointbalance}) using gradient descent. Instead of implementing the constraints on $W$, $\lambda$ and $\delta$ as projections, we embrace nonconvexity: we reparameterize $W = \mathrm{softmax}(\widetilde{W})$, $\lambda = \textrm{softplus}(\tilde{\lambda})$ and $\delta = \textrm{sigmoid}(\tilde{\delta})$ using new unconstrained variables $\widetilde{W} \in \mathbb{R}^{n_1}$ and $\tilde{\lambda},\tilde{\delta} \in \mathbb{R}$. We initialize $\widetilde{W} = 0$, $\tilde{\lambda} = 1$ and $\tilde{\delta} = 0$.

\textbf{Simulations}. Both the subgroup and celebrity simulations are run with $n = 1000, d = 3000$. All presented results are averaged over multiple independent runs. 



\end{document}